\documentclass[10pt,conference]{IEEEtran}
\usepackage{epsf,psfrag,amssymb,amsfonts,cite,subfigure}
\newtheorem{theorem}{Theorem}
\newtheorem{example}{Example}

\newtheorem{definition}{Definition}

\def\psfancypar#1#2{\begingroup\def\par{\endgraf\endgroup\lineskiplimit=0pt}
               \setbox2=\hbox{\large\sc #2}
               \newdimen\tmpht \tmpht \ht2 \advance\tmpht by \baselineskip
               \font\hhuge=Times-Bold at \tmpht
               \setbox1=\hbox{{\hhuge #1}}
               \count7=\tmpht \count8=\ht1
               \divide\count8 by 1000 \divide\count7 by \count8 
               \tmpht=.001\tmpht\multiply\tmpht by \count7 
               \font\hhuge=Times-Bold at \tmpht
               \setbox1=\hbox{{\hhuge #1}}
               \noindent
                \hangindent1.05\wd1
               \hangafter=-2 {\hskip-\hangindent
               \lower1\ht1\hbox{\raise1.0\ht2\copy1}%
                \kern-0\wd1}\copy2\lineskiplimit=-1000pt}

\newcommand{\beq}{\begin{equation}}
\newcommand{\eeq}{\end{equation}}
\newcommand{\bqa}{\begin{eqnarray}}
\newcommand{\eqa}{\end{eqnarray}}
\newcommand{\bqn}{\begin{eqnarray*}}
\newcommand{\eqn}{\end{eqnarray*}}

\newcommand{\be}{\begin{enumerate}}
\newcommand{\ee}{\end{enumerate}}
\newcommand{\bi}{\begin{itemize}}
\newcommand{\ei}{\end{itemize}}
\newcommand{\bd}{\begin{description}}
\newcommand{\ed}{\end{description}}
\newcommand{\ba}{\begin{array}}
\newcommand{\ea}{\end{array}}
\newcommand{\bde}{\begin{definition}}
\newcommand{\ede}{\end{definition}}
\newcommand{\bex}{\begin{example}}
\newcommand{\eex}{\end{example}}

 
\def\boxit#1{\vbox{\hrule\hbox{\vrule\kern3pt
        \vbox{\kern3pt#1\kern3pt}\kern3pt\vrule}\hrule}}

\def\reals{ { {\rm  I \kern-0.15em R }  } }
\def\complex{ {\,{{\rm C} \kern-0.50em \raise0.20ex {  |}}\, }}

\def\0bf{{\bf 0}}
\def\1bf{{\bf 1}}
\def\2bf{{\bf 2}}
\def\3bf{{\bf 3}}
\def\4bf{{\bf 4}}
\def\5bf{{\bf 5}}
\def\6bf{{\bf 6}}
\def\7bf{{\bf 7}}
\def\8bf{{\bf 8}}
\def\9bf{{\bf 9}}

\def\ybf{{\bf y}}

\def\ybf{{\bf y}}

\def\Rbf{{\bf R}}

\def\Cmat{\mathcal{C}}

\def\Mmat{\mathcal{M}}

\def\Rmat{\mathcal{R}}

\def\Umat{\mathcal{U}}

\def\Xmat{\mathcal{X}}
\def\Ymat{\mathcal{Y}}

%

%


\def\Rxx{\Rbf_{\ssstyle X\kern-.1em X}}

\let\ssstyle=\scriptscriptstyle


\def\Kout{\setbox1=\hbox{\Huge\bf K}\hbox to
1.05\wd1{\hspace{.05\wd1}
\def\Sout{\setbox1=\hbox{\Huge\bf S}\hbox to 1.05\wd1{\hspace{.05\wd1}

\title{\LARGE On the Sum Capacity of the Discrete Memoryless Interference Channel with One-Sided Weak Interference and Mixed Interference}\author{\authorblockN{Fangfang Zhu and Biao Chen}
\authorblockA{Syracuse University\\
Department of EECS\\
Syracuse, NY 13244\\
Email: fazhu\{bichen\}@syr.edu}}

\begin{document}
\maketitle
\begin{abstract}
The sum capacity of a class of discrete memoryless interference channels is determined. This class of channels is defined analogous to the Gaussian Z-interference channel with weak interference; as a result, the sum capacity is achieved by letting the transceiver pair subject to the interference communicates at a rate such that its message can be decoded at the unintended receiver using single user detection. Moreover, this class of discrete memoryless interference channels is equivalent in capacity region to certain discrete degraded interference channels. This allows the construction of a capacity outer-bound using the capacity region of associated degraded broadcast channels. The same technique is then used to determine the sum capacity of the discrete memoryless interference channel with mixed interference. The above results allow one to determine sum capacities or capacity regions of several new discrete memoryless interference channels.
\end{abstract}

\section{Introduction}\label{sec:intro}
The interference channel (IC) models the situation where the transmitters communicate
with their intended receivers while generating interference to unintended
receivers. Despite decades of intense research, the capacity region of IC remains unknown except for a few
special cases. These include interference channels with strong and very strong interference \cite{Carleial:75IT, Sato:78IT,Sato:81IT, Han&Kobayashi:81IT, Costa&ElGamal:87IT}; classes of deterministic and semi-deterministic ICs \cite{ElGamal&Costa:82IT, Chong&Motani:09IT}; and classes of discrete degraded ICs \cite{Benzel:79IT, Liu&Ulukus:08IT}.

Parallel capacity results exist for the discrete memoryless IC (DMIC)
and the Gaussian IC (GIC). Carleial first obtained capacity region for GIC with very strong interference \cite{Carleial:75IT}. This result was subsequently extended by Sato \cite{Sato:78IT} to that of DMICs with very strong interference. Note that the definition of DMIC with very strong interference can actually be broadened to be more consistent with its Gaussian counterpart \cite{Xu-etal:Globecom2010}. Sato \cite{Sato:81IT} and Han and Kobayashi \cite{Han&Kobayashi:81IT} independently established in 1981 the capacity region of GIC with strong interference, where the capacity is the same as that of a compound multiple access channel. In \cite{Sato:81IT} Sato also conjectured the conditions of DMICs under strong interfernce, which was eventually proved by Costa and El Gamal \cite{Costa&ElGamal:87IT} in 1987.

While the capacity region for the general GIC remains unknown, there have been recent progress in characterizing the sum capacity of certain GICs, including: GICs with one-sided weak interference \cite{Sason:04IT}, noisy interference  \cite{Shang-etal:09IT, Motahari&Khandani:09IT, Annapureddy&Veeravalli:09IT}, and mixed interference \cite{Motahari&Khandani:09IT}. This paper attempts to derive parallel sum capacity results for DMICs with weak one-sided and mixed inference. Our definitions of one-sided, weak, or mixed interference are motivated by properties associated with the corresponding Gaussian channels. Some of those definitions are intimately related to those introduced in \cite{Chong-etal:07IT} which studies the capacity region of the discrete memoryless Z-channel.

The rest of the paper is organized as follows. Section~\ref{sec: preliminaries} presents the channel model and relevant previous results. Section~\ref{sec:main} defines DMICs with one-sided weak interference and derives their sum capacities. We refer to those DMICs with one-sided interference as DMZIC (i.e., discrete memoryless Z interference channel) for ease of presentation. The equivalence between the DMZIC with weak interference and the discrete degraded interference channel (DMDIC) is established which allows one to construct a capacity outer-bound for the DMZIC using the capacity region of the associated degraded broadcast channel. Section~\ref{sec:mixed} defines DMICs with mixed interference  and derives the sum capapcity for this class of channels. Section~\ref{sec:conclusion} concludes this paper.

\section{Preliminaries}\label{sec: preliminaries}
\subsection{Discrete Memoryless Interference Channels\label{sec:DMIC}}
A DMIC is specified by its input alphabets $\Xmat_1$ and $\Xmat_2$,  output alphabets $\Ymat_1$ and $\Ymat_2$, and the channel transition matrices:
\begin{eqnarray}
p(y_1|x_1x_2)&=&\sum_{y_2\in \Ymat_2}p(y_1y_2|x_1x_2),\label{eq:DMIC_condition1}\\
p(y_2|x_1x_2)&=&\sum_{y_1\in \Ymat_1}p(y_1y_2|x_1x_2).\label{eq:DMIC_condition2}
\end{eqnarray}

The DMIC is said to be \textit{memoryless} if
\begin{equation}
p(y_1^ny_2^n|x_1^nx_2^n)=\prod_{i=1}^n p(y_{1i}y_{2i}|x_{1i}x_{2i}).
\end{equation}

A $(n,2^{nR_1},2^{nR_2},\lambda_1,\lambda_2)$ \textit{code} for a DMIC with independent information consists of two message sets $\Mmat_1 =\{1,2,\cdots,2^{nR_1}\}$ and $\Mmat_2=\{1,2,\cdots,2^{nR_2}\}$ for senders $1$ and $2$ respectively, two encoding functions:
\bqn
f_1: \Mmat_1\rightarrow \Xmat_1^n,\hspace{.1in} f_2: \Mmat_2\rightarrow\Xmat_2^n,
\eqn
two decoding functions:
\bqn
\varphi_1: \Ymat_1^n\rightarrow \Mmat_1,\hspace{.1in} \varphi_2: \Ymat_2^n\rightarrow \Mmat_2,
\eqn
and the average probabilities of error:
\small
  \bqn \lambda_1\!\!\!\!\!\!&=&\!\!\!\!\!\!\frac{1}{|\Mmat_1||\Mmat_2|}\sum_{w_1=1}^{2^{nR_1}}\sum_{w_2=1}^{2^{nR_2}}Pr\{\varphi_1(\ybf_1)\neq w_1|W_1=w_1,W_2=w_2\},\\
  \lambda_2\!\!\!\!\!\!&=&\!\!\!\!\!\!\frac{1}{|\Mmat_1||\Mmat_2|}\sum_{w_1=1}^{2^{nR_1}}\sum_{w_2=1}^{2^{nR_2}}Pr\{\varphi_2(\ybf_2)\neq w_2|W_1=w_1,W_2=w_2\}.
  \eqn
  \normalsize

A rate pair $(R_1,R_2)$ is said to be \textit{achievable} for the DMZIC if and only if there exist a sequence of
$(2^{nR_1}, 2^{nR_2}, n, \lambda_1, \lambda_2)$ codes such that $\lambda_1,\lambda_2\rightarrow 0$ as $n\rightarrow \infty$.
The \textit{capacity region} of a DMZIC is defined as the closure of the set of all
achievable rate pairs.

\subsection{Existing Results for GICs}
The received signals of a GIC in its standard form are
\bqa
Y_1&=& X_1+\sqrt{a}X_2+Z_1, \label{eq:y1}\\
Y_2&=& \sqrt{b}X_1+X_2+Z_2, \label{eq:y2} \eqa
where $a$ and $b$ are
the channel coefficients corresponding to the interference links,
$X_i$ and $Y_i$ are the transmitted and received signals, and the
channel input sequence $X_{i1}, X_{i2}, \cdots, X_{in}$ is subject
to the power constraint $\displaystyle\sum_{j=1}^{n}X_{ij}^2\leq
nP_i$, $i=1,2$, $Z_1$ and $Z_2$ are Gaussian noises with zero mean
and unit variance, independent of $X_1, X_2$.

Sason in \cite{Sason:04IT} proved that the sum capacity for GICs with one-sided weak interference ($a<1$ and $b=0$) is
\bqn
\frac{1}{2}\log(1+P_2)+\frac{1}{2}\log\left(1+\frac{P_1}{1+aP_2}\right).
\eqn

Motahari and Khandani in \cite{Motahari&Khandani:09IT} established that the sum capacity for GICs with mixed interference ($a\leq 1$ and $b\geq 1$) is
\small
\bqn
\min\left\{\frac{1}{2}\log\left(1+\frac{P_1}{1+aP_2}\right),\frac{1}{2}\log\left(1+\frac{bP_1}{1+P_2}\right)\right\} +\frac{1}{2}\log(1+P_2).
\eqn
\normalsize
We attempt to extend these results to DMICs with appropriately defined one-sided weak interference and mixed interference.
\subsection{Useful Properties of Markov Chains}
The following properties of Markov chains are useful throughout the paper:
\begin{itemize}
  \item Decomposition: $X-Y-ZW\Longrightarrow X-Y-Z$;
  \item Weak Union: $X-Y-ZW\Longrightarrow X-YW-Z$;
  \item Contraction: $(X-Y-Z)$ and $(X-YZ-W)\Longrightarrow X-Y-ZW$.
\end{itemize}

\section{The DMZIC with Weak Interference}\label{sec:main}
\subsection{Discrete Memoryless Z-Interference Channel}
\begin{definition}
For the DMIC defined in Section~\ref{sec:DMIC}, if
\bqa
p(y_2|x_2)=p(y_2|x_1x_2),\label{DMZIC_condition2}
\eqa
for all $x_1$, $x_2$, $y_2$, or equivalently,
\bqa
X_1- X_2 - Y_2
\eqa
forms a Markov chain, this DMIC is said to have one-sided interference.
\end{definition}

We refer to such DMIC as simply DMZIC. The definition is a natural extension of that for Gaussian ZIC where $X_2$ causes interference on $Y_1$.
From the definition, it follows that $X_1$ and $Y_2$ are independent for all input distribution $p(x_1)p(x_2)$.

To define DMZIC with weak interference, we first revisit some properties of Gaussian ZIC with weak interference.
Similar to that established in \cite{Chong-etal:07IT}, it is straightforward to show that a Gaussian ZIC with weak interference is equivalent in its capacity region to a degraded Gaussian ZIC
satisfying the Markov chain
\bqa\label{eq:weakcond2}
X_2- (X_1,Y_2)- Y_1.
\eqa
This is referred in \cite{Chong-etal:07IT} as degraded Gaussian Z channel of type-I.
This motivates us to define DMZIC with weak interference as follows.
\begin{definition}
A DMZIC is said to have \textit{weak interference} if the channel transition probability factorizes as
\bqa
p(y_1y_2|x_1x_2)=p(y_2|x_2)p'(y_1|x_1y_2),
\eqa
for some $p'(y_1|x_1y_2)$, or, equivalently, the channel is stochastically degraded.
\end{definition}

In the absence of receiver cooperation, a stochastically degraded interference channel is equivalent in its capacity to a physically degraded interference channel. As such, we will assume in the following that the channel is physically degraded, i.e., the DMZIC admits the Markov chain $X_2-(X_1,Y_2)-Y_1$. As a consequence, the following inequality holds
\begin{equation}
I(U;Y_2)\geq I(U;Y_1|X_1),\label{ieq:weak}
\end{equation}
for all input distributions $p(x_1)p(u)p(x_2|u)$.

The channel transition probability $p(y_1y_2|x_1x_2)$ becomes
\bqn
p(y_1y_2|x_1x_2)&=&p(y_2|x_1x_2)p(y_1|x_1x_2y_2)\\
&=&p(y_2|x_2)p(y_1|x_1y_2).
\eqn

The above definition of weak interference leads to the following sum capacity result.
\begin{theorem}\label{thm:DMZIC_CR}
The sum capacity of a DMZIC with weak interference as defined above is
\bqa\label{eq:DMZIC_SC}
\Cmat_{sum}=\max_{p(x_1)p(x_2)}\{I(X_1;Y_1)+I(X_2;Y_2)\}.
\eqa
\end{theorem}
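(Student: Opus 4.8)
\emph{Overall strategy.} I would prove the two inclusions separately. For $\Cmat_{sum}\ge\max_{p(x_1)p(x_2)}\{I(X_1;Y_1)+I(X_2;Y_2)\}$ I use a random code with \emph{no} rate splitting: fix a product input attaining the maximum, draw the two codebooks independently and i.i.d., let receiver~$2$ decode $W_2$ directly from $Y_2^n$ --- its link $p(y_2|x_2)$ is a genuine point-to-point DMC, so $R_2<I(X_2;Y_2)$ is enough --- and let receiver~$1$ decode $W_1$ from $Y_1^n$ treating $X_2$ as channel noise; since the interfering codeword is i.i.d., the channel receiver~$1$ effectively sees is the memoryless DMC $p(y_1|x_1)=\sum_{x_2}p(x_2)p(y_1|x_1x_2)$, so $R_1<I(X_1;Y_1)$ is enough. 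A routine joint-typicality analysis closes this half; no decoding of the interference is required.

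\emph{Converse, reduction to one inequality.} Using the reduction already established I assume the channel physically degraded, $X_2-(X_1,Y_2)-Y_1$, so $p(y_1y_2|x_1x_2)=p(y_2|x_2)p'(y_1|x_1y_2)$. Fano's inequality with data processing gives $nR_1\le I(X_1^n;Y_1^n)+n\epsilon_n$ and $nR_2\le I(X_2^n;Y_2^n)+n\epsilon_n$. I would then split $I(X_1^n;Y_1^n)=I(X_1^n,X_2^n;Y_1^n)-I(X_2^n;Y_1^n|X_1^n)$ and single-letterize the first term, which is immediate because the channel is memoryless given \emph{both} inputs: $I(X_1^n,X_2^n;Y_1^n)\le\sum_i[I(X_{1i};Y_{1i})+I(X_{2i};Y_{1i}|X_{1i})]$. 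Everything then comes down to the single inequality
\begin{equation}
I(X_2^n;Y_2^n)-I(X_2^n;Y_1^n|X_1^n)\ \le\ \sum_{i=1}^n\big(I(X_{2i};Y_{2i})-I(X_{2i};Y_{1i}|X_{1i})\big),
\end{equation}
which refines the single-letter degradedness inequality (\ref{ieq:weak}) --- taken with $U=X_2$ --- to a multi-letter statement. Granting it, I substitute, introduce a uniform time-sharing variable $Q$ (noting $X_{1Q}$ and $X_{2Q}$ are conditionally independent given $Q$), and let $n\to\infty$ to obtain $\Cmat_{sum}\le\max_{p(x_1)p(x_2)}\{I(X_1;Y_1)+I(X_2;Y_2)\}$.

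\emph{The key inequality and the main obstacle.} Moving the single-letter terms across and using that receiver~$2$'s link is memoryless, the displayed inequality is equivalent to the entropy statement
\begin{equation}
\sum_{i=1}^n H(Y_{1i}|X_{1i})-H(Y_1^n|X_1^n)\ \le\ \sum_{i=1}^n H(Y_{2i})-H(Y_2^n),
\end{equation}
i.e.\ the excess correlation among the $Y_{1i}$'s conditioned on $X_1^n$ is no larger than that among the $Y_{2i}$'s. I would prove this letter by letter: the left side equals $\sum_i I(X_1^{i-1},X_{1,i+1}^n,Y_1^{i-1};Y_{1i}|X_{1i})$; in a physically degraded channel each $Y_{1i}$ depends on the other inputs and on the past outputs only through $(X_{1i},Y_{2i})$, so data processing replaces $Y_{1i}$ by $Y_{2i}$ in every term, and because $Y_{2i}$ is independent of $X_1^n$ the $i$-th term collapses to $I(Y_1^{i-1};Y_{2i}|X_1^n)$; finally, independence of the two messages and codebooks makes $Y_{2i}$ conditionally independent of $(X_1^n,Y_1^{i-1})$ given $Y_2^{i-1}$, whence $I(Y_1^{i-1};Y_{2i}|X_1^n)\le I(Y_2^{i-1};Y_{2i})$ and the sum telescopes to $\sum_i H(Y_{2i})-H(Y_2^n)$. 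This correlation comparison is the step I expect to be delicate: it is the only place the weak-interference hypothesis is genuinely used, and it is exactly what separates the tight bound from the strictly looser $\Cmat_{sum}\le\max\{I(X_1;Y_1|X_2)+I(X_2;Y_2)\}$ that the crude genie --- handing $X_2^n$ to receiver~$1$ --- would give. In spirit it is the converse for the degraded broadcast channel $X_2\mapsto(Y_2,Y_1)$ obtained by freezing $X_1$, the route flagged in the introduction, recast as a self-contained single-letterization.
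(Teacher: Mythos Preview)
Your argument is correct, and it reaches the same single-letter bound as the paper, but the converse is organized differently.  The paper never splits $I(X_1^n;Y_1^n)=I(X_1^n,X_2^n;Y_1^n)-I(X_2^n;Y_1^n|X_1^n)$.  Instead it works directly on $I(X_1^n;Y_1^n)+I(X_2^n;Y_2^n)$, introduces the auxiliary $U_i\triangleq Y_2^{i-1}$ from the outset, and uses it to rewrite the running sum as $\sum_i\big(I(X_{1i};Y_{1i})+I(U_i;Y_{1i}|X_{1i})+I(X_{2i};Y_{2i}|U_i)\big)$; the weak-interference hypothesis then enters in a single stroke through the pre-stated consequence $I(U;Y_1|X_1)\le I(U;Y_2)$ of the Markov chain $X_2-(X_1,Y_2)-Y_1$, which converts the middle term into $I(U_i;Y_{2i})$ and lets the last two terms recombine into $I(X_{2i};Y_{2i})$.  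Your route instead reduces everything to the entropy comparison $\sum_i H(Y_{1i}|X_{1i})-H(Y_1^n|X_1^n)\le\sum_i H(Y_{2i})-H(Y_2^n)$ and proves it by a chain of data-processing steps ($Y_{1i}\to Y_{2i}$, then $Y_1^{i-1}\to Y_2^{i-1}$).  The paper's identification of $U_i=Y_2^{i-1}$ is more economical and makes the parallel with the degraded-broadcast converse explicit; your formulation, on the other hand, isolates the content of the weak-interference assumption as a clean ``excess-correlation'' inequality, which is a pleasant way to see why the crude genie bound $I(X_1;Y_1|X_2)+I(X_2;Y_2)$ is loose.  Both approaches use exactly the same Markov structure; they differ only in bookkeeping.
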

\begin{proof}
This sum rate is achieved by two receivers decoding their own messages while treating any interference, if present, as noise.

For the converse,
\small
\bqn
&&\!\!\!\!\!\!n(R_1+R_2)-n \epsilon \\
&\stackrel{(a)}{\leq}&\!\!\!\!\!\!I(X_1^n;Y_1^n)+I(X_2^n;Y_2^n)\\
&\stackrel{(b)}{=}&\!\!\!\!\!\!\sum_{i=1}^n \left(H(Y_{1i}|Y_1^{i-1})-H(Y_{1i}|Y_1^{i-1}X_1^n)+H(Y_{2i}|Y_2^{i-1})\right.\\
&&\left.-H(Y_{2i}|Y_2^{i-1}X_2^n)\right)
\eqn
\bqn
&\stackrel{(c)}{\leq}&\!\!\!\!\!\!\sum_{i=1}^n \left(H(Y_{1i})-H(Y_{1i}|Y_1^{i-1}X_1^nY_2^{i-1})+H(Y_{2i}|Y_2^{i-1})\right.\\
&&\left.-H(Y_{2i}|Y_2^{i-1}X_{2i})\right)\\
&\stackrel{(d)}{=}&\!\!\!\!\!\!\sum_{i=1}^n\left(H(Y_{1i})-H(Y_{1i}|X_1^nY_2^{i-1})+I(X_{2i};Y_{2i}|U_i)\right)\\
&\stackrel{(e)}{=}&\!\!\!\!\!\!\sum_{i=1}^n\left(H(Y_{1i})-H(Y_{1i}|X_{1i}Y_2^{i-1})+I(X_{2i};Y_{2i}|U_i)\right)\\
&=&\!\!\!\!\!\!\sum_{i=1}^n (I(U_iX_{1i};Y_{1i})+I(X_{2i};Y_{2i}|U_i))\\
&=&\!\!\!\!\!\!\sum_{i=1}^n (I(X_{1i};Y_{1i})+I(U_i;Y_{1i}|X_{1i})+I(X_{2i};Y_{2i}|U_i))\\
&\stackrel{(f)}{\leq}&\!\!\!\!\!\!\sum_{i=1}^n (I(X_{1i};Y_{1i})+I(U_i;Y_{2i})+I(X_{2i};Y_{2i}|U_i))\\
&=&\!\!\!\!\!\!\sum_{i=1}^n (I(X_{1i};Y_{1i})+I(U_iX_{2i};Y_{2i}))\\
&\stackrel{(g)}{=}&\!\!\!\!\!\!\sum_{i=1}^n (I(X_{1i};Y_{1i})+I(X_{2i};Y_{2i})),
\eqn
\normalsize
where $U_i\triangleq Y_2^{i-1}$ for all $i$, $(a)$ follows the Fano's Inequality, $(b)$ is from the chain rule and the definition of mutual information, $(c)$ is because of the fact that conditioning reduces entropy, and that $Y_{2i}$ is independent of any other random variables given $X_{2i}$, $(d)$ is due to the memoryless property of the channel and the fact that $Y_{1i}$ is independent of any other random variables given $X_{1i}$ and $Y_{2i}$, then $(X_{1,i}^{n},Y_{1i})-(X_1^{i-1},Y_2^{i-1})- Y_1^{i-1}$ forms a Markov chain. By the weak union property, the Markov chain $Y_{1i}-(X_1^{n},Y_2^{i-1})-Y_1^{i-1}$ holds; $(e)$ is because of the Markov chain $(X_{1}^{i-1},X_{1,i+1}^n)- (X_{1i},Y_2^{i-1})- Y_{1i}$. The easiest way to prove it is using the \textit{Independence Graph}. Alternatively, we first note that the Markov chain
\bqn
(X_{1}^{i-1},X_{1,i+1}^n,Y_2^{i-1})- (X_{1i},Y_{2i})- Y_{1i}
\eqn
holds, since given $X_{1i}$ and $Y_{2i}$, $Y_{1i}$ is independent of $X_{1}^{i-1},X_{1,i+1}^n,Y_2^{i-1}$.
By the weak union property, the following Markov chain is obtained:
\bqn
(X_{1}^{i-1},X_{1,i+1}^n)- (X_{1i},Y_2^i)- Y_{1i}.
\eqn
Together with the Markov chain
\bqn
(X_{1}^{i-1},X_{1,i+1}^n)- X_{1i}-Y_2^i
\eqn
because of the independence between $Y_2^n$ and $X_1^n$, we get the Markov chain:
\bqn
(X_{1}^{i-1},X_{1,i+1}^n)- X_{1i}-(Y_{1i},Y_2^i)
\eqn
by the contraction property. Again, using the weak union property and then the decomposition property, we obtain the Markov chain
\bqn
(X_{1}^{i-1},X_{1,i+1}^n)- (X_{1i},Y_2^{i-1})- Y_{1i}
\eqn
as desired. Since $U_i$ and $X_{1i}$ are independent, then $p(x_1x_2u)=p(x_1)p(u,x_2)$, thus $(f)$ comes from (\ref{ieq:weak}). Finally, $(g)$ follows from the Markov chain $U_i- X_{2i}- Y_{2i}$. At last, by introducing a time-sharing random variable $Q$, one obtains
\bqn
R_1+R_2&\leq& I(X_1;Y_1|Q)+I(X_2;Y_2|Q)+\epsilon\\
&\leq& \max_{p(x_1)p(x_2)}\{I(X_1;Y_1)+I(X_2;Y_2)\}+\epsilon.
\eqn
\end{proof}

\textit{Remark $1$:} The Markov chain (\ref{eq:weakcond2}) is a sufficient, but not necessary, condition for the mutual information condition
\bqa
I(X_2;Y_1|X_1)\leq I(X_2;Y_2),\label{eq:DMZIC_weakextention}
\eqa
for all product input distribution on $\Xmat_1\times\Xmat_2$. One can find examples such that the mutual information condition holds but the Markov chain is not valid. This is different from that of the Gaussian case; it can be shown that the coefficient $a\leq 1$ in a Gaussian ZIC is a sufficient and necessary condition for (\ref{eq:DMZIC_weakextention}) to hold. It is yet unknown if condition (\ref{eq:DMZIC_weakextention}) is sufficient for the sum capacity result (\ref{eq:DMZIC_SC}) to hold for DMZIC with weak interference.

\subsection{Capacity Outer-bound for DMZIC with Weak Interference}\label{sec:equivalence}
For Gaussian ZICs with weak interference, Sato \cite{Sato:78IT} obtained an outer-bound using the capacity region of a related Gaussian broadcast channel constructed due to the equivalence in capacity between a GZIC with weak interference and a degraded GIC. The same technique can be used to obtain a capacity outer-bound for DMZIC with weak interference, i.e., that satisfies the Markov chain $X_2-(X_1,Y_2)-Y_1$. Specifically, for any such DMZIC with weak interference, one can find an equivalent (in capacity region) DMDIC whose capacity region is bounded by that of an associated degraded broadcast channel.

\begin{theorem}\label{thm:outer-bound}
For a DMZIC that satisfies the Markov chain $X_2-X_1Y_2-Y_1$, the capacity region is outer-bounded by
\small
\bqn
\Rmat_{OB}=\overline{co}\left\{\bigcup_{p(u)p(x_1x_2|u)}(R_1,R_2)\left|\begin{array}{l}R_1\leq I(U;Y_1),\\ R_2\leq I(X_1X_2;Y_2'|U)\end{array}\right.\right\},
\eqn
\normalsize
where $U-X_1X_2-Y_2'-Y_1$ forms a Markov chain and $\|\Umat\|=\min\{\|\Ymat_1\|,\|\Ymat_2'\|,\|\Xmat_1\|\cdot\|\Xmat_2\|\}$.
\end{theorem}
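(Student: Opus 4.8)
The plan is to derive this outer bound by the route Sato used for the Gaussian ZIC: enhance the channel until it becomes a \emph{degraded} interference channel, then exploit the fact that every interference-channel code is also a broadcast-channel code once the two encoders are merged, so that the DMZIC capacity region is contained in the capacity region of an associated degraded broadcast channel, whose characterization is classical and coincides with $\Rmat_{OB}$. Thus the argument is a chain of two set inclusions followed by one known identity, and no new single-letter converse computation is needed.

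First I would reduce to a physically degraded DMZIC: as already noted, in the absence of receiver cooperation the capacity region depends only on the marginals $p(y_1|x_1x_2)$ and $p(y_2|x_1x_2)$, so we may take the joint law to satisfy $X_2-(X_1,Y_2)-Y_1$, i.e.\ $p(y_1|x_1x_2y_2)=p(y_1|x_1y_2)$. Next I would enhance receiver~2 by a genie that reveals $X_1^n$, so its output becomes $Y_2'=(X_1,Y_2)$. Supplying side information cannot shrink the capacity region, so the enhanced channel still yields a valid outer bound; and since $X_1=f_1(W_1)$ is independent of $(W_2,X_2,Y_2)$ the region is in fact unchanged, which is the promised equivalence with a DMDIC. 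The point of the enhancement is that now, for \emph{every} input distribution,
\[
p(y_1\,|\,x_1x_2,y_2')=p(y_1\,|\,x_1x_2,x_1,y_2)=p(y_1\,|\,x_1y_2)=p(y_1\,|\,y_2'),
\]
so $X_1X_2-Y_2'-Y_1$ holds and the enhanced channel is a physically degraded interference channel with $Y_1$ the degraded output.

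Then, given any $(n,2^{nR_1},2^{nR_2},\lambda_1,\lambda_2)$ code for the enhanced channel, with decoders $\varphi_1$ on $Y_1^n$ and $\varphi_2'$ on $(Y_2')^n$, I would merge the two encoders into the single map $f(w_1,w_2)=(f_1(w_1),f_2(w_2))$ with super-input $\tilde X=(X_1,X_2)$ and keep $\varphi_1,\varphi_2'$ unchanged; this is a legitimate code for the broadcast channel $\tilde X\to(Y_1,Y_2')$ with the same error probabilities, and by the previous step that broadcast channel is physically degraded with $\tilde X-Y_2'-Y_1$. Hence every rate pair achievable on the DMZIC lies in the capacity region of this degraded broadcast channel. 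Invoking the classical characterization of that region — with $Y_1$ the weaker receiver carrying $R_1$ and $Y_2'$ the stronger receiver carrying $R_2$, namely $R_1\le I(U;Y_1)$ and $R_2\le I(\tilde X;Y_2'\,|\,U)$ over $U-\tilde X-Y_2'-Y_1$ (a union that is already closed and convex for a degraded broadcast channel) — together with the Fenchel--Eggleston--Carath\'eodory support-lemma cardinality bound $\|\Umat\|\le\min\{\|\Xmat_1\|\cdot\|\Xmat_2\|,\|\Ymat_1\|,\|\Ymat_2'\|\}$, produces exactly $\Rmat_{OB}$.

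The Markov-chain verifications and the cardinality bound are routine. The one point that deserves care is the logic linking the steps: one must check that the genie enhancement simultaneously does not shrink the capacity region \emph{and} makes the channel degraded for \emph{all} input laws, and that the interference-to-broadcast reduction is sound — in particular, the broadcast region allows an arbitrary joint $p(x_1x_2|u)$ whereas an interference code only produces product inputs, so $\Rmat_{OB}$ is a valid outer bound even though it may be strictly larger than the true capacity region. I expect this bookkeeping, rather than any hard inequality, to be the crux.
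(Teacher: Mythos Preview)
Your proposal is correct and follows essentially the same route as the paper: construct the enhanced output $Y_2'=(X_1,Y_2)$ (the paper phrases this as ``a bijection of $X_1$ and $Y_2$''), verify that the resulting channel is a degraded interference channel with the same capacity region, and then outer-bound by the associated degraded broadcast channel \`a la Costa. The paper simply cites Costa for the equivalence and broadcast-bound steps, whereas you spell out the genie/independence argument and the encoder-merging reduction, but the underlying argument is the same.
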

\begin{proof}
Suppose that the DMZIC with weak interference has inputs and outputs $X_1$, $X_2$ and $Y_1$, $Y_2$ respectively. Let us denote by  $X_1'$, $X_2'$ and $Y_1'$, $Y_2'$ the inputs and outputs of another DMIC. Set $X_1'=X_1$, $X_2'=X_2$, and $Y_1'=Y_1$ but define $Y_2'$ to be a bijection of $X_1$ and $Y_2$, denoted as $Y_2'=f(X_1,Y_2)$. As the Markov chain $(X_1',X_2')-Y_2'-Y_1'$ holds, the DMIC specified by the input pair $(X_1',X_2')$, and the output pair $(Y_1',Y_2')$ is indeed a DMDIC.
	
The proof that this DMDIC has the same capacity region as the specified DMZIC, and hence is outer-bounded by the associated broadcast channel  follows in exactly the same fashion as Costa's proof for the Gaussian case \cite{Costa:85IT}, hence is omitted here.
\end{proof}

\textit{Remark $2$:} The output $Y_2'$ need not necessarily be a bijection function of $X_1$ and $Y_2$; instead, depending on the transition probability $p(y_1|x_1y_2)$, other $Y_2'$ can be constructed. However, the associated broadcast channels would have the same the capacity region. It will become clear in the following example.

\subsection{Numerical Example}
\begin{example}
Let $\|\Xmat_1\|=\|\Xmat_2\|=\|\Ymat_1\|=\|\Ymat_2\|=2$ and the channel transition probability be given by
\bqn
p(y_1y_2|x_1x_2)=p(y_2|x_2)p(y_1|x_1y_2),
\eqn
where $p(y_2|x_2)$ and $p(y_1|x_1y_2)$ are specified in Table \ref{table:ex1}.
\begin{table}[htp]
\centering
\caption{Channel Transition Probabilities}\label{table:ex1}
		\begin{tabular}{|c|c|c||c|c|c|}
  \hline
   $p(y_2|x_2)$ & $y_2=0$ & $y_2=1$ & $p(y_1|x_1y_2)$ & $y_1=0$ & $y_1=1$\\\hline
	 $x_2=0$ & $.1$ & $.9$ & $x_1y_2=00$ or $11$ & $.75$ & $.25$\\\hline
	 $x_2=1$ & $.9$ & $.1$ & $x_1y_2=01$ or $10$ & $0$ & $1$\\\hline
		\end{tabular}
\end{table}

By Theorem \ref{thm:DMZIC_CR}, the sum capacity is
\bqn
\Cmat_{sum} = \max_{p(x_1)p(x_2)} I(X_1;Y_1)+I(X_2;Y_2)\approx.531.
\eqn
Moreover, one can construct $Y_2'$ as follows:
\bqn
Y_2'=\left\{\begin{array}{l}0,\textrm{     if }x_1y_2 = 00\textrm{ or }11,\\1, \textrm{   otherwise}.\end{array}\right.
\eqn
Then $p(y_2'|x_1x_2)$ is given in Table~\ref{table:y_2'}.
\begin{table}[htp]
\centering
\caption{$P(Y_2'|X_1X_2)$}\label{table:y_2'}
\begin{tabular}{|c|c|c|}
\hline
$p(y_2'|x_1x_2)$ & $y_2'=0$ & $y_2'=1$\\\hline
$x_1x_2=00$ & $.1$ & $.9$\\\hline
$x_1x_2=01$ & $.9$ & $.1$\\\hline
$x_1x_2=10$ & $.9$ & $.1$\\\hline
$x_1x_2=11$ & $.1$ & $.9$\\\hline
\end{tabular}
\end{table}

Using Theorem \ref{thm:outer-bound}, the capacity region of the DMZIC is outer-bounded by  that of the associated DMDBC:
\bqn  
\Rmat_{OB}=\overline{co}\left\{\bigcup_{p(u)p(x_1x_2|u)}(R_1,R_2)\left|\begin{array}{l}R_1\leq I(U;Y_1),\\ R_2\leq I(X_1X_2;Y_2'|U)\end{array}\right.\right\},
\eqn
If one takes the bijection function to construct $Y_2'$, it will lead to the same outer-bound.
If we fix $R_2$ to be $x$, then
\bqn
\max_{R_2=x}R_1 &=& \max_{H(Y_2'|U)=x+h_2(.1)} H(Y_1)-H(Y_1|U)\\
&\leq& \log(|\Ymat_1|)-f_T(x+h_2(.1)),
\eqn
where $f_T(\cdot)$ is a function defined by Witsenhausen and Wyner \cite{Witsenhausen&Wyner:75IT}.
Fig.~\ref{fig:outer-bound} depicts the new outer-bound specified by
\bqn
\Rmat_{OB}'=\left\{(R_1,R_2)|R_1\leq \log|\Ymat_1|-f_T(x+h_2(.1)), R_2\leq x\right\}.
\eqn
This new outer-bound significantly improves upon the following bound
\bqn
R_1&\leq&I(X_1;Y_1|X_2),\\
R_2&\leq&I(X_2;Y_2),\\
R_1+R_2&\leq&I(X_1;Y_1)+I(X_2;Y_2).
\eqn
\begin{figure}[htp]
\centerline{\leavevmode \epsfxsize=3.25in \epsfysize=2.65in
\epsfbox{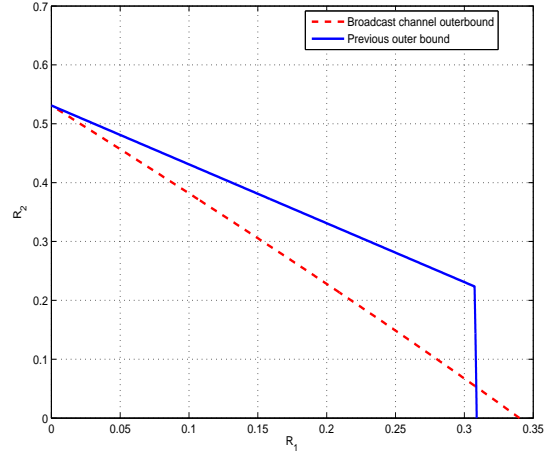}}
\caption{\label{fig:outer-bound} Comparison of the outer-bounds.}
\end{figure}
\end{example}

\section{The DMIC with Mixed Interference}\label{sec:mixed}
\begin{definition}
A DMIC is said to have \textit{mixed interference} if it satisfies the Markov chain
\small
\bqa\label{eq:mixed_cond1}
X_2-(X_1,Y_2)-Y_1
\eqa 
\normalsize and
\small
\bqa\label{eq:mixed_cond2}
I(X_1;Y_1|X_2)\leq I(X_1;Y_2|X_2)
\eqa
\normalsize
for all possible product distributions on $\Xmat_1\times\Xmat_2$.
\end{definition}
This definition is motivated by GIC with mixed interference, which can be shown to be equivalent in capacity region to a degraded GIC satisfying (\ref{eq:mixed_cond1}) by setting $p(y_1y_2|x_1x_2)=p(y_2|x_1x_2)p'(y_1|x_1y_2)$, where $p'(y_1|x_1y_2)$ is normal distribution with mean $(1-\sqrt{ab})x_1+\sqrt{a}y_2$ and variance $1-a$. 
The sum capacity for GIC with mixed interference was established in \cite{Motahari&Khandani:09IT}. We obtain a parallel result for the DMIC with mixed interference as defined above.
\begin{theorem}\label{thm:mixed_cr}
The sum capacity of the DMIC with mixed interference satisfying the two conditions (\ref{eq:mixed_cond1}) and (\ref{eq:mixed_cond2}) is
\small
\bqa
\max_{p(x_1)p(x_2)}\left\{I(X_2;Y_2|X_1)+\min\{I(X_1;Y_1),I(X_1;Y_2)\}\right\}.
\eqa
\end{theorem}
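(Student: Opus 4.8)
\textit{Plan for achievability:} I would use the coding scheme in which the \emph{entire} message $W_1$ is treated as a common message and decoded at both receivers --- the natural choice when the $X_1\!\to\!Y_2$ interference is strong. Fixing a product input $p(x_1)p(x_2)$, receiver~1 decodes $W_1$ treating $X_2$ as noise and receiver~2 decodes the pair $(W_1,W_2)$ as on a multiple access channel, which gives $R_1\le I(X_1;Y_1)$, $R_1\le I(X_1;Y_2|X_2)$, $R_2\le I(X_2;Y_2|X_1)$ and $R_1+R_2\le I(X_1X_2;Y_2)$. Because $X_1$ and $X_2$ are independent we have $I(X_1;Y_1)\le I(X_1;Y_1|X_2)$, and by (\ref{eq:mixed_cond2}) this is $\le I(X_1;Y_2|X_2)$, so the second constraint is inactive; hence the largest achievable sum rate for this input is $\min\{I(X_1;Y_1)+I(X_2;Y_2|X_1),\,I(X_1X_2;Y_2)\}$, which, since $I(X_1X_2;Y_2)=I(X_1;Y_2)+I(X_2;Y_2|X_1)$, equals $I(X_2;Y_2|X_1)+\min\{I(X_1;Y_1),I(X_1;Y_2)\}$. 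Maximizing over $p(x_1)p(x_2)$ (time sharing cannot improve a linear objective over the resulting convex hull) gives the claimed value.

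\textit{Plan for the converse:} After Fano's inequality I would derive two single-letter sum-rate bounds, one tied to each defining condition. (i) Exploiting the degradedness (\ref{eq:mixed_cond1}) together with (\ref{ieq:weak}) --- essentially the converse argument of Theorem~\ref{thm:DMZIC_CR} with $U_i=Y_2^{i-1}$, adapted to the fact that here $Y_2$ depends on both inputs, and with $X_1^n$ handed to receiver~2 as a genie so that $nR_2\le I(X_2^n;Y_2^n|X_1^n)+n\epsilon$ --- I would aim at $n(R_1+R_2)\le\sum_i[I(X_{1i};Y_{1i})+I(X_{2i};Y_{2i}|X_{1i})]+n\epsilon$. (ii) Exploiting the strength of the $X_1\!\to\!Y_2$ link (\ref{eq:mixed_cond2}): receiver~2 can also recover $W_1$, so $nR_1\le I(X_1^n;Y_1^n|X_2^n)+n\epsilon\le I(X_1^n;Y_2^n|X_2^n)+n\epsilon$ (the second step being the $n$-letter form of (\ref{eq:mixed_cond2}), justified as in the strong-interference converse because, conditioned on $x_2^n$, both $X_1^n\!\to\!Y_j^n$ are memoryless); together with $nR_2\le I(X_2^n;Y_2^n)+n\epsilon$ this gives $n(R_1+R_2)\le I(X_1^nX_2^n;Y_2^n)+n\epsilon\le\sum_i[I(X_{1i};Y_{2i})+I(X_{2i};Y_{2i}|X_{1i})]+n\epsilon$.

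\textit{Combining the bounds, and the main obstacle:} With a time-sharing variable $Q$ (induced law $p(q)p(x_1|q)p(x_2|q)$), the two bounds combine into $R_1+R_2\le I(X_2;Y_2|X_1,Q)+\min\{I(X_1;Y_1|Q),I(X_1;Y_2|Q)\}+\epsilon$, and what remains --- the heart of the proof --- is to show this is at most $\max_{p(x_1)p(x_2)}\{I(X_2;Y_2|X_1)+\min\{I(X_1;Y_1),I(X_1;Y_2)\}\}$, i.e., that time sharing cannot enlarge the bound. I would attack this using that, for fixed $p(x_2)$, the map $p(x_1)\mapsto I(X_2;Y_2|X_1)$ is affine while $p(x_1)\mapsto I(X_1;Y_1)$ and $p(x_1)\mapsto I(X_1;Y_2)$ are concave (so the bracketed expression is concave in $p(x_1)$), combined with a minimax (Sion) exchange against the parameter $\lambda$ in $\min\{a,b\}=\min_{\lambda\in[0,1]}\{\lambda a+(1-\lambda)b\}$ in order to absorb $Q$. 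This is genuinely the delicate step: simply minimizing the two \emph{separately} optimized sum bounds only yields $\min\{\max f_1,\max f_2\}$, which in general strictly exceeds the target $\max\min\{f_1,f_2\}$ --- in the Gaussian case these coincide because a single Gaussian input maximizes both bounds at once, but for DMICs the two maximizers generally differ, so one must keep the two bounds coupled at the per-letter level and make essential use of (\ref{eq:mixed_cond1})--(\ref{eq:mixed_cond2}) beyond generic convexity. Secondary technical points are the single-letterization in step~(i) without the $Y_2$-one-sidedness available in Theorem~\ref{thm:DMZIC_CR} and the $n$-letter lift of (\ref{eq:mixed_cond2}) in step~(ii).
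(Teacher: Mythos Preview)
Your plan coincides with the paper's proof: the achievability is the same scheme (decode $W_1$ at both receivers, treat $X_2$ as noise at receiver~1, multiple-access decoding at receiver~2), and the converse consists of the same two per-letter sum-rate bounds --- one via the Costa--El~Gamal strong-interference $n$-letter lift of (\ref{eq:mixed_cond2}), and one via the degradedness (\ref{eq:mixed_cond1}) with the genie $X_1^n$ handed to receiver~2 and the auxiliary identification $U_i=(X_1^{i-1},X_{1,i+1}^n,Y_2^{i-1})$, which is precisely the adaptation of the Theorem~\ref{thm:DMZIC_CR} argument you describe.

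The point you flag as ``the main obstacle'' --- passing from $\min\{\max f_1,\max f_2\}$ to $\max_{p(x_1)p(x_2)}\min\{f_1,f_2\}$ --- is real, and the paper's proof does \emph{not} address it: the converse there simply ends with the two separately maximized bounds $R_1+R_2\le\max_{p(x_1)p(x_2)}I(X_1X_2;Y_2)$ and $R_1+R_2\le\max_{p(x_1)p(x_2)}[I(X_1;Y_1)+I(X_2;Y_2|X_1)]$, i.e., exactly the $\min\{\max f_1,\max f_2\}$ form you warn about. So your diagnosis is correct and in fact goes beyond what the paper writes. Your proposed Sion/concavity attack is plausible in spirit but incomplete as stated: concavity of the bracketed expression in $p(x_1)$ for \emph{fixed} $p(x_2)$ does hold, but the maximization is over the full product $p(x_1)p(x_2)$, and neither $I(X_1;Y_1)$ nor $I(X_1;Y_2)$ is in general concave (or convex) in $p(x_2)$, so a direct minimax exchange does not immediately close the gap.
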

\normalsize
\begin{proof}
In order to achieve this sum rate, user $1$ transmits its message at a rate such that both receivers can decode it by treating the signal from user $2$ as noise; user $2$ transmits at the interference-free rate since receiver $2$ is able to subtract the interference from user $X_1$.

For the converse, we prove the following two sum rate bounds separately:
\small
\bqa
n(R_1+R_2)&\leq& \sum_{i=1}^nI(X_{1i}X_{2i};Y_{2i}),\label{eq:mixed_sc1}\\
n(R_1+R_2)&\leq& \sum_{i=1}^nI(X_{1i};Y_{1i})+I(X_{2i};Y_{2i}|X_{1i}).\label{eq:mixed_sc2}
\eqa
\normalsize
For (\ref{eq:mixed_sc1}), the derivation follows the same steps as Costa and El Gamal's result\cite{Costa&ElGamal:87IT}.
For (\ref{eq:mixed_sc2}), we use similar techniques for establishing the sum capacity of the DMZIC with weak interference in Section~\ref{sec:main}. First, notice that (\ref{eq:mixed_cond1}) implies
\small
\bqa
I(U;Y_1|X_1)\leq I(U;Y_2|X_1)\label{eq:mixed_cond3}
\eqa
\normalsize
for any $U$ whose joint distribution with $X_1,X_2,Y_1,Y_2$ is
\small
\bqa
p(u,x_1,x_2,y_1,y_2)=p(u)p(x_1x_2|u)p(y_1y_2|x_1x_2). \label{eq:u}
\eqa 
\normalsize Then,
\small
\bqn
&&\!\!\!\!\!\!n(R_1+R_2)-n \epsilon \\
&\stackrel{(a)}{\leq}&\!\!\!\!\!\!I(X_1^n;Y_1^n)+I(X_2^n;Y_2^n|X_1^n)\\
&=&\!\!\!\!\!\!\sum_{i=1}^n \left(H(Y_{1i}|Y_1^{i-1})-H(Y_{1i}|Y_1^{i-1}X_1^n)+H(Y_{2i}|Y_2^{i-1}X_1^n)\right.\\
&&\left.-H(Y_{2i}|Y_2^{i-1}X_2^nX_1^n)\right)\\
&\stackrel{(b)}{\leq}&\!\!\!\!\!\!\sum_{i=1}^n \left(H(Y_{1i})-H(Y_{1i}|Y_1^{i-1}X_1^nY_2^{i-1})+H(Y_{2i}|U_iX_{1i})\right.\\
&&\left.-H(Y_{2i}|X_{2i}X_{1i}U_i)\right)\\
&=&\!\!\!\!\!\!\sum_{i=1}^n (I(U_iX_{1i};Y_{1i})+I(X_{2i};Y_{2i}|U_iX_{1i}))\\
&=&\!\!\!\!\!\!\sum_{i=1}^n (I(X_{1i};Y_{1i})+I(U_i;Y_{1i}|X_{1i})+I(X_{2i};Y_{2i}|U_iX_{1i}))\\
&\stackrel{(c)}{\leq}&\!\!\!\!\!\!\sum_{i=1}^n (I(X_{1i};Y_{1i})+I(U_i;Y_{2i}|X_{1i})+I(X_{2i};Y_{2i}|U_iX_{1i}))\\
&\stackrel{(d)}{=}&\!\!\!\!\!\!\sum_{i=1}^n (I(X_{1i};Y_{1i})+I(X_{2i};Y_{2i}|X_{1i})),
\eqn
\normalsize
where $(a)$ is because of the independence between $X_1^n$ and $X_2^n$; $(b)$ is from the fact that conditioning reduces entropy and $U_i\triangleq (X_1^{i-1}X_{1,i+1}^n, Y_2^{i-1})$; $(c)$ is from (\ref{eq:mixed_cond3}); and $(d)$ is because the memoryless property of the channel and (\ref{eq:u}).
Finally, from (\ref{eq:mixed_sc1}) and (\ref{eq:mixed_sc2}), we have
\small
\begin{eqnarray}
R_1+R_2&\leq& \max_{p(x_1)p(x_2)} I(X_1X_2;Y_2),\nonumber\\
R_1+R_2 &\leq& \max_{p(x_1)p(x_2)}I(X_1;Y_1)+I(X_2;Y_2|X_1),\nonumber
\end{eqnarray}
respectively.
\normalsize
\end{proof}

We give the following example where the obtained sum capacity helps determine the capacity region of a DMIC.

\begin{example}
	Consider the following deterministic channel:
	\small
	\bqn
	Y_1 &=& X_1\cdot X_2,\\
	Y_2 &=& X_1 \oplus X_2,
	\eqn
	\normalsize
	where the input and output alphabets $\Xmat_1$, $\Xmat_2$, $\Ymat_1$ and $\Ymat_2 = \{0,1\}$. Notice that this channel does not satisfy the condition of the deterministic interference channel in \cite{ElGamal&Costa:82IT}.
	Obviously, the Markov chain (\ref{eq:mixed_cond1}) holds. Moreover,
	\small
	\bqn
	I(X_1;Y_1|X_2)\!\!\!\!\!\!&=&\!\!\!\!\!\!H(Y_1|X_2)= p(x_2=1)H(X_1),\\
	I(X_1;Y_2|X_2)\!\!\!\!\!\!&=&\!\!\!\!\!\!H(Y_2|X_2) = H(X_1).
	\eqn
	\normalsize
	Therefore,
	\small
	\bqn
	I(X_1;Y_1|X_2)\leq I(X_1;Y_2|X_2),
	\eqn
	\normalsize
	for all possible input product distributions on $\Xmat_1\times\Xmat_2$.
	Therefore, this is a DMIC with mixed interference. Apply Theorem \ref{thm:mixed_cr}, we compute the sum capacity is
	\small
	\bqn
	\Cmat_{sum}\!\!&=&\!\!\max_{p(x_1)p(x_2)}\left[\min(I(X_1X_2;Y_2),I(X_1;Y_1)+I(X_2;Y_2|X_1))\right]\\
	&=&1.
	\eqn
\normalsize
Given that $(1,0)$ and $(0,1)$ are both trivially achievable, the above sum capacity leads to the capacity region for this DMIC to be $\{(R_1,R_2):R_1+R_2\leq 1\}$.
\end{example}
\section{Conclusion}\label{sec:conclusion}
In this paper, we derived the sum capacity for the DMZICs with weak interference where weak interference is defined using a Markov condition. Similar techniques are then applied to derive the sum capacity for DMIC with mixed interference. Both results are analogous to the sum capacity results for the corresponding Gaussian channel, both in the expression of the capacity and in the encoding schemes that achieve the capacity.

The weak interference condition is defined using a Markov chain, as opposed to that using the mutual information inequality. While it appears to be somewhat restrictive, it is not known whether the definition using the mutual information condition will lead to the same sum capacity result.


\begin{thebibliography}{17}
\bibitem{Carleial:75IT}
A.~B.~Carleial,
\newblock ``{A case where interference does not reduce capacity},''
\newblock {\em IEEE Trans. Inf. Theory}, vol. 21, no. 5, pp. 569-570, Sep. 1975.

\bibitem{Sato:78IT}
H.~Sato,
\newblock ``{On the capacity region of a discrete two-user channel for strong interference},''
\newblock {\em IEEE Trans. Inf. Theory}, vol. 24, no. 3, pp. 377-379, May 1978.

\bibitem{Sato:81IT}
H.~Sato,
\newblock ``{The capacity of the Gaussian interference channel under strong interference},''
\newblock {\em IEEE Trans. Inf. Theory}, vol. 27, pp. 786-788, Nov. 1981.

\bibitem{Han&Kobayashi:81IT}
T.~S.~Han and K.~Kobayashi,
\newblock ``{A new achievable rate region for the interference channel},''
\newblock {\em IEEE Trans. Inf. Theory}, vol. 27, pp. 49-60, Jan. 1981.

\bibitem{Costa&ElGamal:87IT}
M.~H.~M.~Costa and A.~El Gamal,
\newblock ``{The capacity region of the discrete memoryless interference channel with strong interference},''
\newblock {\em IEEE Trans. Inf. Theory}, vol. 33, pp. 710-711, Sep. 1987.

\bibitem{ElGamal&Costa:82IT}
A.~El Gamal and M.~H.~M.~Costa,
\newblock ``{The capacity regio of a class of deterministic interference channels},''
\newblock {\em IEEE Trans. Inf. Theory}, vol. 28, no. 2, pp. 343-346, Mar. 1982.

\bibitem{Chong&Motani:09IT}
H.~F.~Chong and M.~Motani,
\newblock ``{The capacity region of a class of semideterministic interference channels},''
\newblock {\em IEEE Trans. Inf. Theory}, vol. 55, no.2 ,pp. 598-603, Feb. 2009.

\bibitem{Benzel:79IT}
R.~Benzel,
\newblock ``{The capacity region of a class of discrete additive degraded interference channels},''
\newblock {\em IEEE Trans. Inf. Theory}, vol. 25, no. 2, pp. 228-231, Mar. 1979.

\bibitem{Liu&Ulukus:08IT}
N.~Liu and S.~Ulukus,
\newblock ``{The capacity region of a class of discrete degraded interference channels},''
\newblock {\em IEEE Trans. Inf. Theory}, vol. 54, no. 9, pp. 4372-4378, Sep. 2008.

\bibitem{Xu-etal:Globecom2010}
J.~Xu, H.~Chen and B.~Chen,
\newblock ``{New observation on interference channels under strong/very strong interference},''
\newblock in {\em Proc. IEEE Global Communications Conference (Clobecom'2010)}, Miami, FL, December 2010.

\bibitem{Sason:04IT}
I.~Sason,
\newblock ``{On achievable rate regions for the Gaussian interference channels},''
\newblock	{\em IEEE Trans. Inf. Theory}, vol. 50, no. 6, pp. 1345-1356, Jun. 2004.

\bibitem{Shang-etal:09IT}
X.~Shang, G.~Kramer and B.~Chen,
\newblock ``{A new outer bound and the noisy-interference sum-rate capacity for Gaussian interference channels},''
\newblock {\em IEEE Trans. Inf. Theory}, vol. 55, no. 2, pp. 689-699, Feb. 2009.

\bibitem{Motahari&Khandani:09IT}
A.~S.~Motahari and A.~K.~Khandani,
\newblock ``{Capacity bounds for the Gaussian interference channel},''
\newblock {\em IEEE Trans. Inf. Theory},  vol. 55, no. 2, pp. 620-643, Feb. 2009.

\bibitem{Annapureddy&Veeravalli:09IT}
V.~S.~Annapureddy and V.~V.~Veeravalli,
\newblock ``{Gaussian interference networks: sum capacity in the low-interference regime and new outer bounds on the capacity region},''
\newblock {\em IEEE Trans. Inf. Theory}, vol. 55, no. 7, pp. 3032-3250, Jun. 2009.

\bibitem{Chong-etal:07IT}
H.~F.~Chong, M.~Motani, and H.~K.~Garg,
\newblock ``{Capacity theorems for the ``Z'' channel},''
\newblock {\em IEEE Trans. Inf. Theory}, vol. 53, no. 4, pp. 1348-1365, Apr. 2007.

\bibitem{Costa:85IT}
M.~H.~M.~Costa,
\newblock ``{On the Gaussian interference channel},''
\newblock {\em IEEE Trans. Inf. Thoery}, vol. IT-31, no.5, pp. 607-615, Sep. 1985.

\bibitem{Witsenhausen&Wyner:75IT}
H.~S.~Witsenhausen and A.~D.~Wyner,
\newblock ``{A conditional entropy bound for a pair of discrete random variables},''
\newblock {\em IEEE Trans. Inf. Theory}, vol. 21, no. 5, pp. 493-501, Sep. 1975.
\end{thebibliography}
\end{document}